\newtheorem{corollary}{\bf Corollary}[section]
\def\figref#1{\ref{#1}}
\def\eqref#1{(\ref{#1})}
\def\tabref#1{\ref{#1}}
\def\secref#1{\ref{#1}}
\def\Gamma{J}
\def\del{{\partial}}
\def\dom{\mathop{\rm dom}}
\def\E{\mathcal{E}}
\def\T{\overline{T}}
\def\V{\tilde{V}}
\def\VS{V_{\rm S}}
\def\VD{V_{\rm D}}
\def\ov#1{\overline{#1}}
\def\vct#1{{\mathchoice{\mbox{\boldmath$#1$}}{\mbox{\boldmath$#1$}}%
  {\mbox{\scriptsize\boldmath$#1$}}{\mbox{\scriptsize\boldmath$#1$}}}}
\title{Power packet transferability via symbol propagation matrix}
\author{
Shinya Nawata\footnote{Department of Electrical Engineering, Kyoto University, Katsura, Nishikyo, Kyoto 615-8510, Japan}, Atsuto Maki\footnote{School of Computer Science and Communication, Royal Institute of Technology (KTH), Teknikringen 14, Stockholm, 100 44 Sweden} and Takashi Hikihara$^{\ast}$}
\date{}
\begin{document}




\maketitle

\begin{center}
 {\bf Abstract}
\end{center}

Power packet is a unit of electric power transferred by a power pulse with an information tag.
In Shannon's information theory, messages are represented by symbol sequences in a digitized manner.
Referring to this formulation, we define symbols in power packetization as a minimum unit of power transferred by a tagged pulse.
Here, power is digitized and quantized.
In this paper, we consider packetized power in networks for a finite duration, giving symbols and their energies to the networks.
A network structure is defined using a graph whose nodes represent routers, sources, and destinations.
First, we introduce symbol propagation matrix (SPM) in which symbols are transferred at links during unit times.
Packetized power is described as a network flow in a spatio-temporal structure.
Then, we study the problem of selecting an SPM in terms of transferability, that is, the possibility to represent given energies at sources and destinations during the finite duration.
To select an SPM, we consider a network flow problem of packetized power.
The problem is formulated as an M-convex submodular flow problem which is known as generalization of the minimum cost flow problem and solvable.
Finally, through examples, we verify that this formulation provides reasonable packetized power.

\begin{center}
 {\bf Keywords}
\end{center}
\begin{center}
 power packet, router, network flow problem, electrical energy network
\end{center}

\section{Introduction}

Electric power has been considered as a continuous flow based on circuit theory, in which power flow is governed by Kirchhoff laws and Tellegen's theorem \cite{bib:desoer1969b}. 
The circuit theory can be generalized to represent various nonlinear complex systems in the system topology with energy dissipation and energy storage as network thermodynamics \cite{bib:oster1971a}.
Here, energy flow is handled in a continuous manner under the conservation of energy.
On the other hand, it is shown in Shannon's information theory \cite{bib:shannon1948a} that ``all technical communications are essentially digital; more precisely, that all technical communications are equivalent to the generation, transmission, and reception, of random binary digits'' \cite{bib:massey1984a}.
Communication networks have been developed in a digitized manner by utilizing packet switching, which breaks messages into smaller pieces named ``packets'', for dynamic assignment of network resources \cite{bib:kleinrock2010a}.  
If we handle electric power in a digitized manner, power distribution will be changed completely different from the conventional.
In this paper, we consider electrical energy networks in which power is digitized and quantized through power packetization \cite{bib:takuno2010i,bib:takahashi2012a,bib:takahashi2015a,bib:zhou2015b,bib:takahashi2016a,bib:mochiyama2016a,bib:nawata2014a_eng,bib:nawata2016a_upstream,bib:hikihara2016i}.

The concept of power packet was proposed in 1990s to manage complicated power flows in power systems caused by various power transactions after deregulation \cite{bib:toyoda_saitoh1998i}.
In the proposal, electric energy routers, which include energy storage devices, were installed into the electrical energy networks.
The router manipulates its own storage device according to the flow control data transferred with the power packet so as to compensate for the difference between the generation schedule and the demand schedule. 
Referring to this work, power packet transactions were proposed for an electric power distribution system \cite{bib:inoue2011a}.
He {\it et al.}  proposed an electric power architecture, rooted in lessons learned from the Internet and microgrids, to produce a grid network designed for distributed renewable energy, prevalent energy storage, and stable autonomous systems \cite{bib:he_katz2008i}.
``Energy packet networks'' were also proposed to provide energy on demand to Cloud Computing servers \cite{bib:gelenbe2012i}.
There is a proposal for controllable-delivery power grid in which electrical power is delivered through discrete power levels directly to customers \cite{bib:Rojas2013a2}. 
In the physical layer, a universal power router is designed and evaluated for residential applications \cite{bib:stalling2012i}.
On the other hand, in most of these proposals, electric energy and information are separately transferred or the physical design is not mentioned.
It has been difficult to realize the practical hardware to deal with electric power in the same way as information, because energy has been transferred with high-power and low-frequency electricity while information has been transferred with low-power and high-frequency electricity.
For ensuring consistency between the physical layer and the logical layer, the synchrony between energy and information is crucial for managing power.

Recently, wide-bandgap power semiconductors, such as Silicon Carbide (SiC) and Gallium Nitride (GaN), have shown material properties enabling power device operation at potentially higher temperatures, voltages, and switching speeds than the current Si technology \cite{bib:funaki2007a,bib:millian2014a}.
High-speed gate drive circuits have been developed to achieve high frequency switching over 1~MHz \cite{bib:takuno2009i,bib:nagaoka2015a}.
This work enables us to handle high-power and high-frequency electricity,
and moreover, to digitize power with power packetization \cite{bib:takuno2010i,bib:takahashi2015a,bib:zhou2015b}.
In the developed system, an information tag is directly and physically attached to each packet with its voltage waveform.
In this way, energy and information are integrated at an individual packet level.
A schematic of the power packet dispatching network is shown in Fig.~\figref{fig:network}.
The system consists of network lines and routers which stores and forwards power packets according to the tag's information.
In the network, the information tag identifies different kinds of power due to different sources, destinations, voltages, control commands, and so on.
When we send the power packets using time-division multiplexing, it becomes possible to distinguish the power at each line by using the information tag as an index. 
The router has 
multiple storage units
to identify the different kinds of power flow.
Here, power is given by the density of power packets between routers \cite{bib:nawata2016a_upstream,bib:takahashi2016a,bib:hikihara2016i}.

\begin{figure}[tbp]
 \centering
 \includegraphics[width=0.6\linewidth]{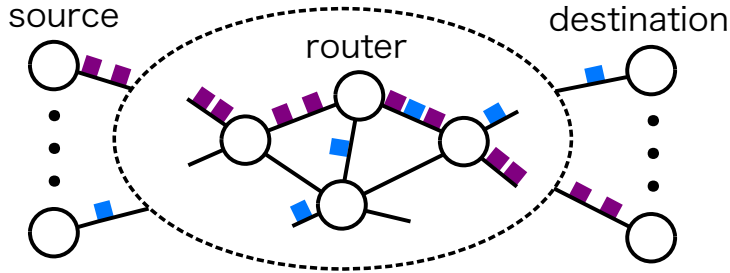}
 \caption{A schematic of power packet dispatching network.}
   \label{fig:network}
\end{figure}

Now, we consider a packet-centric framework of energy transfer.
In Shannon's information theory, messages are represented by symbol sequences in a digitized manner \cite{bib:shannon1948a}.
Referring to this formulation, we define symbols in power packetization as a minimum unit of power transferred by a power pulse with an information tag \cite{bib:nawata2014a_eng}. 
Because a symbol is a minimum unit of power, we ignore the variety of ways in which a symbol can be transferred during a unit time in the physical layer.
Then, energy of each symbol is uniquely determined as a real number\footnote{A load can be treated as resistive with PFC circuits. Thus, power is discussed in real numbers without loss of generality.}\footnote{In this setting, power pulses with the same energy can be represented by a single symbol. The properties which symbols do not specify are treated by indexing the symbols. The index becomes important in terms of redundancy of the system.}.
Here, power packetization is a simultaneous representation of messages and energy with symbol sequences \cite{bib:nawata2014a_eng}.
In information theory, the representation of messages is treated as a coding problem, in which the goal is to minimize the cost such as the length of codewords. 
In power packetization, however, it is important to represent the given energy during a finite duration as the total amount of energy of symbols.
Thus, energy representation is a problem unique to power packetization.
In \cite{bib:nawata2014a_eng}, this problem was considered with a set of symbol sequences which represent a given energy as the total amount.

In this paper, we consider packetized power in networks for a finite duration, giving the set of symbols and their energies to the networks.
Then, packetized power is spatially and temporally transferred as symbols in a digitized and quantized manner:
a symbol is transferred at a link during a unit time;
at each node, energy is represented with symbols sent to and received from neighboring nodes during a finite duration.

To mathematically represent the transmission of packetized power, we refer to the work about detecting bipedal motion using point trajectories in video sequences \cite{bib:maki2009i,bib:maki2013a}.
In this work, to obtain discriminative point trajectories from image sequences over a sufficiently long time period under both image noise and occlusion, probabilistic trajectories are designed by prioritizing the concept of temporal connectedness. 
They are extracted from directed acyclic graphs whose edges represent temporal point correspondences and are weighted with their matching probability in terms of appearance and location. 

Here, we introduce 
{\it symbol propagation matrix} (SPM), a new concept to represent packetized power, 
considering transfer of a symbol as a spatio-temporal correspondence.
In power packetization, unlike the probabilistic trajectories \cite{bib:maki2009i,bib:maki2013a}, each symbol has its energy and packetized power is represented as network flow.
The temporal connectedness is important in power packetization to transfer power in networks with low ``strain'', i.e. the spatial difference of power, which is equal to the temporal change of energy stored in each router.
Then, we consider the problem of selecting an SPM in terms of transferability, that is, the possibility to represent given energies at sources and destinations during the finite duration.
To select an SPM, we consider a network flow problem of packetized power, weighting supplied energy from sources and supplied energy to destinations  (V1), transferred energy at each link during each unit time (V2), and change of stored energy in each router (V3). 
The problem is formulated as an M-convex submodular flow problem which is known as generalization of the minimum cost flow problem and solvable \cite{bib:murota2003b}.
Finally, through examples, we verify that the formulation provides reasonable transmission of packetized power.

\section{Packetized power in networks}\label{sect:spm}

Here, we introduce symbol propagation matrix as a representation of packetized power transferred by symbols  in a digitized and quantized manner.
Via SPM, packetized power is represented as network flow in a spatio-temporal structure. 

\subsection{Symbol propagation matrices}

The set of symbols $\varSigma_{\rm T}$ and their energies $\mathcal{E}: \varSigma_{\rm T} \rightarrow \mathbb{R}_{>0}$ are given to the network. 
The symbols have a partition $\{\varSigma_{m}\}_{m=0}^{M-1}$,
whose cell represents a distinct power flow\footnote{In power packetization, energy is transferred with time division multiplexing (TDM) at links and  stored in the corresponding storage unit in routers. Thus, power flow can be distinguished by the information tags of power packets.}.
For each distinct power flow, energy is represented as a summation of the symbol's energy.
Here, symbols of the same cell can be exchanged under the conservation of energy.

The network structure is given as a directed graph $G=(\V, A)$, where $\V$ is a disjoint union of the set of routers $V$, sources $\VS$, and destinations $\VD$, and $A$ is the set of links.
Here, sources and destinations represent the external system of the network.
The incidence relation is a couple of functions $\partial^{+}: A \rightarrow \tilde{V}$ and $\partial^{-}: A \rightarrow \tilde{V}$.
Other representation of the incidence relation is introduced as a couple of functions 
$\delta^{+}: \tilde{V}\ni v\mapsto \{ a \mid  \partial^{+} a = v \} \in 2^{A}$ and $\delta^{-}: \tilde{V}\ni v\mapsto \{ a \mid  \partial^{-} a = v \} \in 2^{A}$ \cite{bib:iri1969b,bib:iri_fujishige_oyama1986,bib:murota2003b}.
As for the link, power is kept between nodes.
The directions of links are assigned with power directions.

Next, we set that the network is synchronized and 
symbols are transferred during the same unit times $\T=\{\ov{t_0}, \ov{t_1}, \cdots, \ov{t_{N-1}}\}$, 
where $N$ is a positive integer, $\ov{t_n}:= [t_n, t_{n+1})$, and $t_0 < t_1 < \cdots < t_N$.
Here, energy is transferred by $N$ unit times.
Although various power pulses can transfer the energy of the same symbol during a unit time, we ignore the variety and focus on the integrated value of power during the unit time.

Now, we focus on a single cell $\varSigma_{m}$.
At each link $a \in A$ during each unit time $\ov{t}\in \T$, 
there are three cases of transfer of symbols $\varSigma_{m}$: 
Case 1: a single symbol $\sigma\in \varSigma_{m}$ is transferred from node $\del^{+}a$ to node $\del^{-}a$;
Case 2: a single symbol $\sigma\in \varSigma_{m}$ is transferred from node $\del^{-}a$ to node $\del^{+}a$;
Case 3: no symbol is transferred\footnote{Here, because we do not care the various ways of power transfer during each unit time, more than one symbol is not transferred.
For example, if twice the amount of energy of a symbol $\sigma_1 \in \varSigma_m$ is transferred, then we consider that a symbol $\sigma_2 \in \varSigma_m$ with a energy $\E(\sigma_2) = 2\E(\sigma_1)$ is transferred.}. 
Therefore, packetized power is given by a map, which we name 
{\it symbol propagation matrix}:
\begin{equation}
  SPM_{m}: \overline{T}\times A \rightarrow \varSigma_{m} \times \{{\rm f}, {\rm b}\} \cup \{\sigma_{\emptyset}\},
\end{equation}
where $\sigma_{\emptyset}$ is an element which does not belong to $\varSigma_{\rm T}$.
$SPM_{m}(\ov{t}, a) = (\sigma, {\rm f})$ and $SPM_{m}(\ov{t}, a) = (\sigma, {\rm b})$ denote that $\sigma$ is transferred from node $\del^{+}a$ to node $\del^{-}a$ and from node $\del^{-}a$ to node $\del^{+}a$, respectively. 
$SPM_{m}(\ov{t}, a) = \sigma_{\emptyset}$ denotes that no symbol is transferred.

\begin{figure*}[t!]
 \centering
 \subfloat[]{\includegraphics[width=50mm]{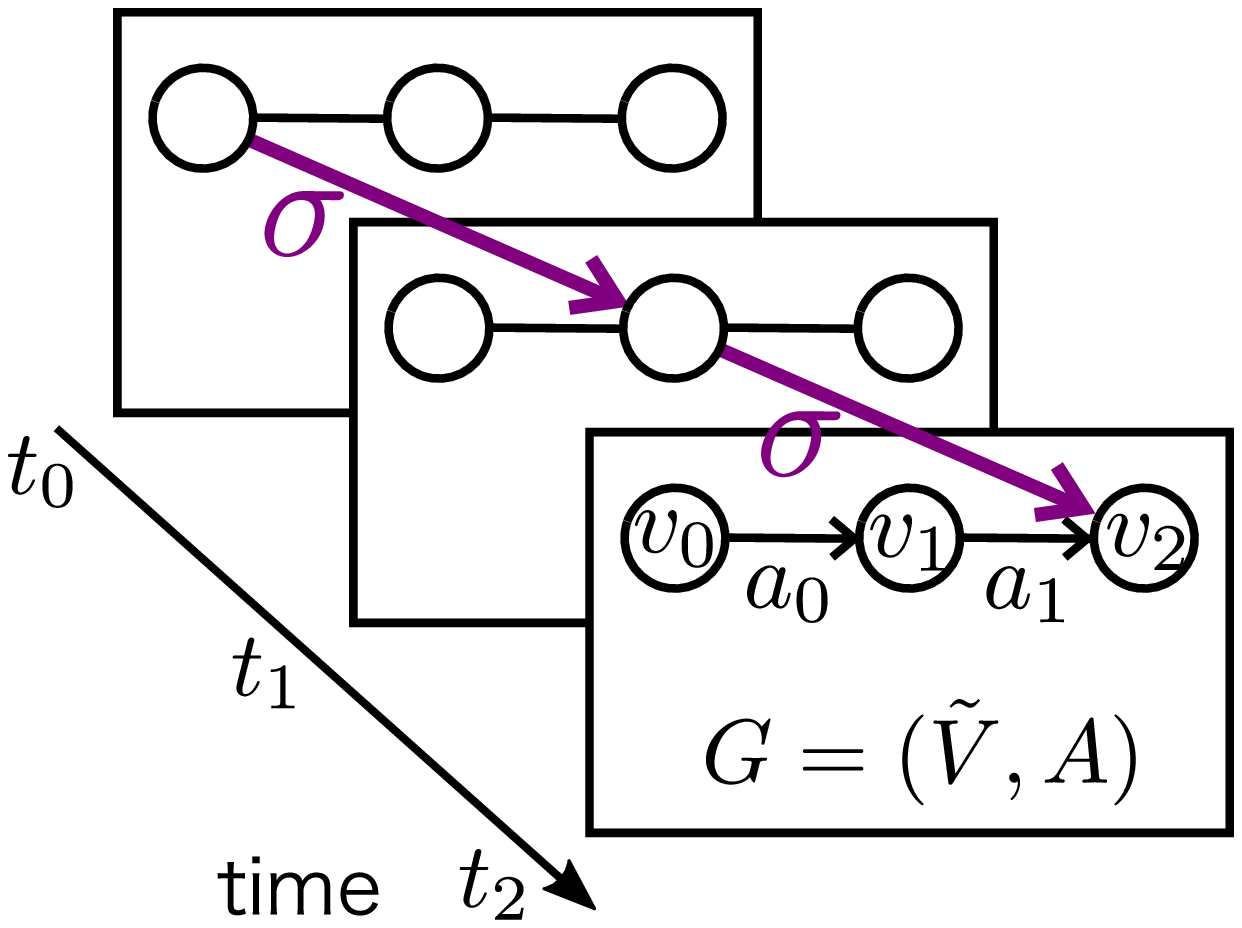}\label{fig:symbol_propagation}}
\hfil
 \subfloat[]{\includegraphics[width=23mm]{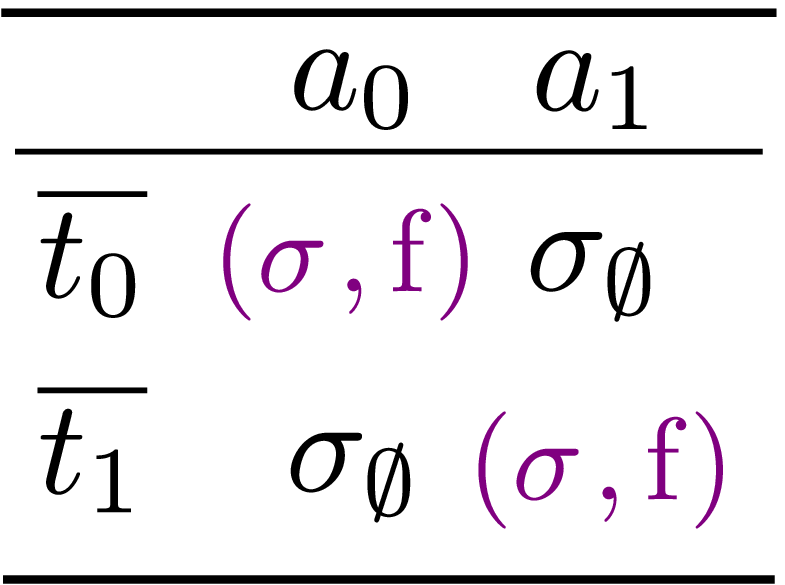}\label{fig:spm}}
\hfil
 \subfloat[]{\includegraphics[width=45mm]{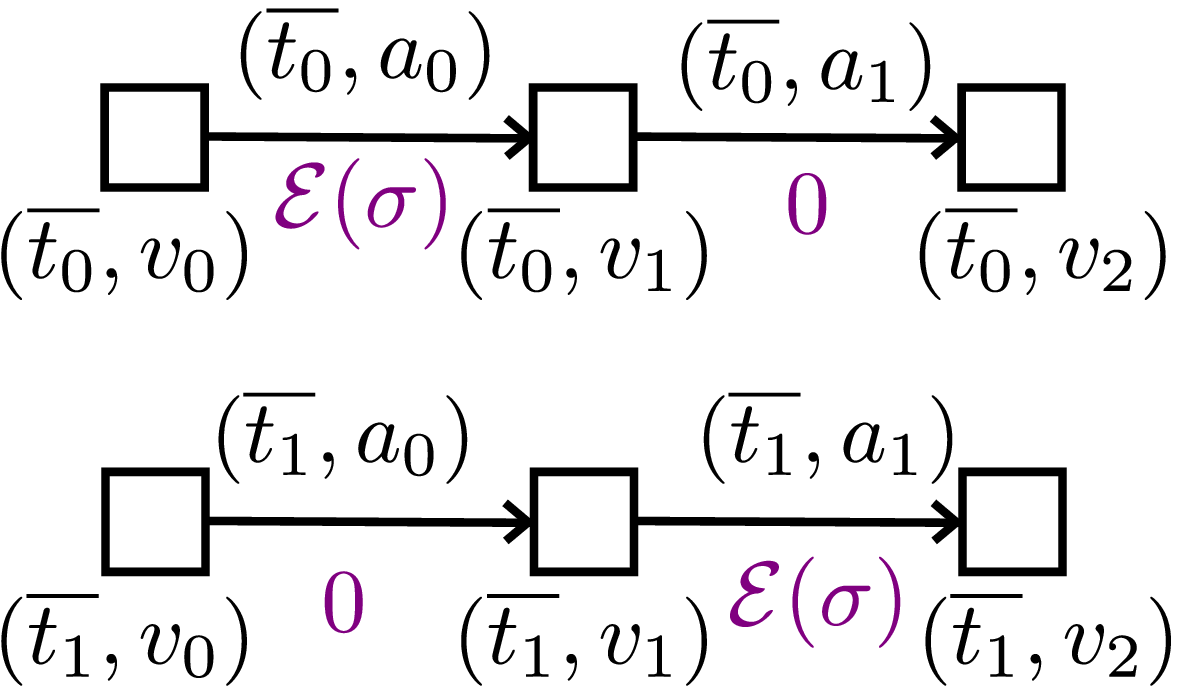}\label{fig:network_flow}}
 
 \caption{Schematic of symbol propagation matrix and packetized power. We set a directed graph $G=(\V, A)$, where $\V=\{v_0, v_1, v_2\}$ and $A=\{a_0, a_1\}$, and unit times $\T=\{\ov{t_0}, \ov{t_1}\}$. (a) A schematic diagram of symbol propagation. A symbol $\sigma\in \varSigma_m$ is transferred at link $a_0$ during unit time $\ov{t_0}$ and at link $a_1$ during unit time $\ov{t_1}$. (b) Symbol propagation matrix shown as a table. (c) Packetized power as network flow on $\hat{G}=(\T\times \V, \T \times A)$.} 
 \label{fig:spm_flow}
\end{figure*}

\subsection{Packetized power as network flow}\label{sec:network_flow}

Here, we introduce packetized power as network flow \cite{bib:iri1969b,bib:iri_fujishige_oyama1986,bib:murota2003b}.
First, we define a graph with spatio-temporal structure induced by the network structure $G$ and unit times $\T$ as
\begin{equation}
 \hat{G}=(\T\times\V, \T\times A),
\end{equation}
whose incidence relation is defined by
\begin{equation}
 \hat{\del}^{+}:\T\times A\ni(\ov{t}, a)\mapsto (\ov{t}, \del^{+}a) \in \T \times \V
\end{equation}
and
\begin{equation}
 \hat{\del}^{-}:\T\times A\ni(\ov{t}, a)\mapsto (\ov{t}, \del^{-}a) \in \T \times \V.
\end{equation}
Then, packetized power is introduced as a network flow on $\hat{G}$, i.e. 
\begin{equation}
 u: \T \times A \rightarrow \mathbb{R},
\end{equation}
and its boundary $\del u: \T \times \V \rightarrow \mathbb{R}$ is defined by 
\begin{equation}\label{eq:conservation}
 \del u(\ov{t}, v) = \sum_{a\in \delta^{+}v} u(\ov{t}, a) - \sum_{a\in \delta^{-}v} u(\ov{t}, a)
 \quad (\ov{t}\in \overline{T}, v\in \tilde{V}).
\end{equation}

Each link of $\hat{G}$, i.e. $(\ov{t}, a)\in \T\times A$, represents a spatio-temporal correspondence at which each symbol is transferred.
Each cell $\varSigma_m$ represents packetized power ${u}_{m}: \T \times A \rightarrow \mathbb{R}$ and $u_m$ is given by $SPM_{m}$ as
\begin{equation}
 {u}_{m}(\ov{t}, a) = \begin{cases}
		   \E(\sigma) & (SPM_{m}(\ov{t}, a) = (\sigma, {\rm f})),\\
		   -\E(\sigma) & (SPM_{m}(\ov{t}, a) = (\sigma, {\rm b})),\\
		    0  & (SPM_{m}(\ov{t}, a) = \sigma_{\emptyset}).
		  \end{cases}
\end{equation}
Here, we assume the conservation of energy; more precisely, we assume that all energy exchanges in each node $v\in \V$ are represented with symbols transferred through the adjacent links\footnote{In practical systems, energy can be dissipated at links and in nodes, and symbols may not be able to keep constant energy between nodes. These can become noise of power packetization, which is one of future work of this paper.} $a\in \delta^{+}v \cup \delta^{-}v$.  
Then, $-\del {u}_{m}(\ov{t}, v)$ is equal to the increment of stored energy corresponding to $\varSigma_m$ in the node $v\in \V$ during the unit time $\ov{t}\in \T$.
Here, packetized power is represented as network flow, in which energy is transferred with symbols in the digitized and quantized form.

\subsection{Example of symbol propagation matrix}

Here, we illustrate the definitions mentioned above.
Figure~\figref{fig:spm_flow} shows a schematic of a symbol propagation matrix and packetized power.
In this example, we set a directed graph $G=(\V, A)$, where $\V=\{v_0, v_1, v_2\}$ and $A=\{a_0, a_1\}$, and unit times $\T=\{\ov{t_0}, \ov{t_1}\}$.
As schematically shown in Fig.~\figref{fig:spm_flow}\subref{fig:symbol_propagation}, 
we consider a transmission of symbols $\varSigma_m$, in which a symbol $\sigma\in \varSigma_m$ is transferred at link $a_0$ during unit time $\ov{t_0}$ and at link $a_1$ during unit time $\ov{t_1}$.
Then, we have the SPM shown in Fig.~\figref{fig:spm_flow}\subref{fig:spm}.

In this example, the graph $\hat{G}=(\T\times \V, \T\times A)$ with spatio-temporal structure is defined as shown in Fig.~\figref{fig:spm_flow}\subref{fig:network_flow}.
Then, packetized power $u_m$, which is represented by the cell $\varSigma_m$, is introduced as a network flow on this graph.
Here, we have $u_{m}(\ov{t_0}, a_0)=\E(\sigma)$, $u_{m}(\ov{t_0}, a_1)=0$, $u_{m}(\ov{t_1}, a_0)=0$, and $u_{m}(\ov{t_1}, a_1)=\E(\sigma)$.
The boundary $\del u_m$ is equal to the change of stored energy corresponding to $\varSigma_m$.
For example, we have $-\del u_m(\ov{t_0}, v_1) = \E(\sigma)$ as the ``strain'', i.e. as the spatial difference of the packetized power $u_m$, and 
this value is equal to the increment of the stored energy in $v_1$ during the unit time $\ov{t_0}=[t_0, t_1)$.

\section{Power packet transferability}\label{sect:msfp3}

To meet supply and demand, i.e. to represent given energies at sources and destinations with symbols,
it is necessary to select a symbol at each link $a\in A$ during each unit time $\ov{t}\in \T$. 
Here, different choices lead to different transferability, that is the possibility to represent the given energies by transmission of symbols.
In this section, we develop a framework to select an SPM in terms of transferability as a network flow problem of packetized power.
To make the problem solvable, we focus on a single power flow represented by cell $\varSigma \in \{\varSigma_{m}\}_{m=0}^{M-1}$. 
Besides, we set that $\E(\varSigma)$ to be successive integers $\{1, 2, \cdots\}$, that is, we consider integer flow $u:\T\times A\rightarrow \mathbb{Z}$.
Because power packet is a unit of power, it is natural to consider integer flow.

\subsection{Features of packetized power contributing to transferability}

In networks, packetized power appears as: 
\begin{itemize}
 \item[V1:] supplied energy from sources and supplied energy to destinations,
 \item[V2:] transferred energy at each link during each unit time, and
 \item[V3:] change of stored energy in each router.
\end{itemize}
In terms of transferability, V1 needs to satisfy the given energies at sources and destinations,
while V2 and V3 need to be small. 
As for V2, because power is given by the density of power packets at each links, 
transferred energy at links should be small to utilize limited density of packets during each unit time.
In addition, by minimizing the summation of transferred energy, we can obtain the network flow in which energy is supplied to each destination from sources placed near the destination.
As for V3, change of stored energy in routers should be suppressed to keep symbol's energy controllable with density modulation of power packets between routers.

Thus, we select the features of packetized power $u:\T\times A \rightarrow \mathbb{Z}$ contributing to transferability as V1, V2, and V3.
V2 is a value of the network flow $u(\ov{t}, a)$ $((\ov{t}, a)\in \T\times A)$, 
while V1 and V3 are calculated from the values of the boundary $\del u(\ov{t}, v)$ $((\ov{t}, v)\in \T\times \V)$.
V1 and V3 include time integral, such as total supplied energy from a source $s\in V_{\rm S}$, i.e. $\sum_{n=0}^{N-1}\del u (\ov{t_n}, s)$.

We introduce a cost function of the network flow problem as the summation of costs on these values.
Because our features include not only values of network flow but also values of boundary and their time integrals,
it is impossible to formulate the problem as the conventional minimum cost flow problem on the spatio-temporal graph $\hat{G}(\T\times \V, \T\times A)$.
Thus, in the next subsection, we provide the formulation using an M-convex submodular flow problem \cite{bib:murota2003b} which is the generalization of the minimum cost flow problem.

\subsection{Formulation as M-convex submodular flow problem}

Now, we formulate the network flow problem to provide transferable packetized power. 
In the spatio-temporal structure, the M-convex submodular flow problem is to find a packetized power $u:\T\times A \rightarrow \mathbb{Z}$ which minimizes the total cost $\Gamma(u)$;
more precisely, the problem is described by the graph $G(\T\times \V, \T\times A)$, univariate discrete convex functions\footnote{A function $\phi: \mathbb{Z}\rightarrow \mathbb{R}\cup \{+\infty\}$ is called a univariate discrete convex function if we have
\begin{equation}\label{eq:univariate}
 \phi(x-1)+\phi(x+1)\geq 2\phi(x) \mbox{ for all } x\in \mathbb{Z}
\end{equation}
and $\dom \phi \neq \emptyset$ \cite{bib:murota2003b}.
Note that, if a function $\phi:\mathbb{R}\rightarrow \mathbb{R}\cup \{+\infty\}$ is convex, $\phi$ satisfies \eqref{eq:univariate}.} $f_{(\ov{t}, a)}: \mathbb{Z}\rightarrow \mathbb{R}\cup \{+\infty\}$ $((\ov{t}, a)\in  \T \times A)$, and an M-convex function $f: \mathbb{Z}^{\T \times \V} \rightarrow \mathbb{R}\cup \{+\infty\}$ as in 
\cite{bib:murota2003b}
\begin{eqnarray}
  \!\!\mbox{Minimize} &&  \Gamma(u)=\!\!\sum_{(\ov{t}, a)\in \T\times A}\!\! f_{(\ov{t}, a)}(u(\ov{t}, a)) +  f(\del u) \label{eq:MSFP3_obj_func}\\
  \!\!\mbox{subject to} && u(\ov{t}, a)\in \dom f_{(\ov{t}, a)} \!\!\quad ((\ov{t}, a)\in \T\times A), \label{eq:MSFP3_dom_fta}\\
  && \del u \in \dom f, \label{eq:MSFP3_dom_f} \\
  && u(\ov{t}, a) \in \mathbb{Z} \quad ((\ov{t}, a)\in \T\times A)\label{eq:MSFP3_integer}.
\end{eqnarray}
Here, \eqref{eq:MSFP3_dom_fta} denotes capacity constraints of links, because $f_{(\ov{t}, a)}$ is convex, and hence $\dom f_{(\ov{t}, a)}$ is an interval.
Therefore, we can set an upper capacity $\hat{c}:\T \times A \rightarrow \mathbb{Z}\cup\{+\infty\}$ and a lower capacity $\check{c}:\T \times A \rightarrow \mathbb{Z}\cup \{-\infty\}$, where it is assumed that $\hat{c}(\ov{t}, a)\geq \check{c}(\ov{t}, a)$ for each $(\ov{t}, a)\in \T\times A$, by defining cost functions $f_{(\ov{t}, a)}$ whose effective domain is equal to the interval $[\check{c}(\ov{t}, a), \hat{c}(\ov{t}, a)]$  $((\ov{t}, a)\in \T \times A)$.  

Then, we introduce the cost function of the boundary as the summation of the costs on V1 and V3.
To this end, we prove in Sect.~\ref{sec:proof_of_M-convexity} that, for a laminar family\footnote{By a laminar family, we mean a nonempty family $\mathcal{T}$ such that \cite{bib:murota2003b} 
\begin{equation}
 X, Y \in  \mathcal{T} \Rightarrow X\cap Y = \emptyset \mbox{ or } X\subset Y \mbox{ or } X \supset Y.
\end{equation} } $\mathcal{T}$ of subsets of $\T \times \V$ and univariate discrete convex functions $f_X: \mathbb{Z}\rightarrow \mathbb{R}\cup \{+\infty\}$ indexed by $X\in \mathcal{T}$,
the function defined by
\begin{equation}\label{eq:node_cost}
 f(\Delta u) =
  \begin{cases}
   \sum_{X\in \mathcal{T}}f_{X}(\Delta u(X)) & ( \Delta u(\T \times \V) = 0 ),\\
   +\infty & (\mbox{otherwise})
  \end{cases}
\end{equation}
is an M-convex function.
Here, we use the notation $\Delta u (X) = \sum_{(\ov{t}, v)\in X}\Delta u(\ov{t}, v) $ 
for $\Delta u \in \mathbb{Z}^{\T\times \V}$ and $X\subset \T\times \V$.
In \eqref{eq:node_cost}, we can set costs on V1 and V3 by setting a laminar family $\mathcal{T}$ and univariate discrete convex functions $\{f_X\}_{X\in \mathcal{T}}$.
For example, we can treat total supplied energy $\sum_{n=0}^{N-1}\del u(\ov{t_n}, s)$ at source $s\in V_{\rm S}$ by including $\T\times \{s\}$ in $\mathcal{T}$.
In the following, we set that the laminar family $\mathcal{T}$ is a disjoint union of a laminar family $\mathcal{T}_{\rm S, D}$ of subsets of $\T \times (V_{\rm S} \cup V_{\rm D})$  and laminar families $\mathcal{T}_{v}$ of subsets of $\T \times \{v\}$ $(v\in V)$.

To sum up, we introduce the cost function $\Gamma$ of packetized power $u$ as
\begin{eqnarray}\label{eq:MSFP3_obj_func_sep}
 \Gamma(u)&{}={}&\sum_{X\in \mathcal{T}_{\rm S, D}}f_{X}(\del u(X)) \nonumber\\
 && {}+{} \sum_{(\ov{t}, a)\in \T\times A}f_{(\ov{t}, a)}(u(\ov{t}, a)) \nonumber\\
 && {}+{} \sum_{v\in V}\sum_{X\in \mathcal{T}_{v}}f_{X}(\del u(X)).
\end{eqnarray}
The first, the second, and the third term in the right-hand side of \eqref{eq:MSFP3_obj_func_sep} correspond to the costs of V1, V2, and V3, respectively.

\subsection{Proof of the M-convexity of the function \texorpdfstring{$f$}{f} in \eqref{eq:node_cost}}\label{sec:proof_of_M-convexity}

In general, a laminar convex function has M$^{\natural}$-convexity\footnote{M$^{\natural}$-convex functions are variants of, and essentially equivalent to, M-convex functions. ''M$^{\natural}$-convex'' should be read ``M-natural-convex'' \cite{bib:murota2003b}.} \cite{bib:murota2003b}. 
The following corollary shows that, if a laminar convex function is restricted to the hyper-plane, the function has M-convexity.
From this corollary, we can confirm that the function $f$ in \eqref{eq:node_cost} satisfies M-convexity.
This is proved referring to Note 9.31. in \cite{bib:murota2003b} with a slight modification.

\begin{corollary}\label{note:M-covexity}
  For a finite set $V$, a laminar family $\mathcal{T}$ of subsets of $V$, univariate discrete convex functions $f_X$ $(X\in \mathcal{T})$, and an integer $r\in \mathbb{Z}$, a function $f:\mathbb{Z}^{V}\rightarrow \mathbb{R} \cup \{+\infty\}$ is defined by 
\begin{equation}
 f(x) = 
  \begin{cases}
    \sum_{X\in \mathcal{T}}f_{X}(x(X)) &  ( x(V)=r),\\
   +\infty                                          & (\mbox{otherwise}).
  \end{cases}
\end{equation}
Then, $f$ has M-convexity.

\end{corollary}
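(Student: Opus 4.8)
\medskip
\noindent\emph{Sketch of the intended proof.}
The plan is to derive the M-convexity of $f$ from two results in \cite{bib:murota2003b} together with one elementary operation, rather than checking the exchange axiom from scratch. First, if one drops the constraint $x(V)=r$, the resulting unconstrained \emph{laminar convex} function $g:\mathbb{Z}^{V}\to\mathbb{R}\cup\{+\infty\}$, $g(x):=\sum_{X\in\mathcal{T}}f_{X}(x(X))$, is M$^{\natural}$-convex --- this is the laminar-convexity fact in \cite{bib:murota2003b} that the present corollary adapts. Second, by the very definition of M$^{\natural}$-convexity, $g$ is M$^{\natural}$-convex on $\mathbb{Z}^{V}$ if and only if the lifted function $\hat g$ on $\mathbb{Z}^{\{0\}\cup V}$ (with $0$ a fresh element), defined by $\hat g(x_{0},x)=g(x)$ when $x_{0}=-x(V)$ and $\hat g(x_{0},x)=+\infty$ otherwise, is M-convex. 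It therefore suffices to exhibit $f$ as a coordinate slice of the M-convex function $\hat g$.

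Concretely, I would: (i) identify $\mathbb{Z}^{V}$ with the slice $\{x_{0}=-r\}$ inside $\mathbb{Z}^{\{0\}\cup V}$ and note that $\hat g(-r,x)$ equals $g(x)$ precisely when $-r=-x(V)$, i.e. when $x(V)=r$, and equals $+\infty$ otherwise --- which is exactly $f(x)$; (ii) invoke the fact that fixing one coordinate of an M-convex function at a constant value produces an M-convex function in the remaining coordinates. Step (ii) is elementary: for two points of $\dom\hat g$ that share the value $x_{0}=-r$, the coordinate $0$ lies in neither the positive nor the negative support of their difference, so the exchange index $j$ produced by the M-EXC axiom for $\hat g$ necessarily lies in $V$ and the associated two-step move leaves the $0$-coordinate untouched; the constant-sum property $x(V)=r$ is inherited as well. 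This gives the claim (tacitly assuming $r$ is such that $\dom f\neq\emptyset$, as is needed for $f$ to qualify as M-convex). Alternatively, the parameter $r$ can be disposed of at the outset: substituting $x=x^{\circ}+y$ for any fixed $x^{\circ}$ with $x^{\circ}(V)=r$ rewrites $f$ as the $r=0$ instance with shifted univariate convex functions $t\mapsto f_{X}(x^{\circ}(X)+t)$, so only $r=0$ need be treated.

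I expect the difficulty here to be bookkeeping rather than conceptual. The two points requiring care are: (a) quoting the correct forms from \cite{bib:murota2003b} of ``laminar convex $\Rightarrow$ M$^{\natural}$-convex'' and of the M$^{\natural}$/M correspondence, with matching hyperplane and sign conventions; and (b) performing the exchange-axiom check for the restriction while keeping the distinction between the M-convex and the M$^{\natural}$-convex exchange axioms --- it is essential that one uses the genuine (M-convex) exchange property of $\hat g$, since that is what forbids the ``empty'' exchange index and thereby confines the move to $\mathbb{Z}^{V}$. A fully self-contained route would instead verify the constant-sum property and M-EXC for $f$ directly, using the nestedness of the members of $\mathcal{T}$ containing a given element to select the exchange partner $j\in\operatorname{supp}^{-}(x-y)$ and then the convexity inequality $f_{X}(t-1)+f_{X}(t+1)\ge 2f_{X}(t)$; that combinatorial choice of $j$ is the step I would expect to be the real obstacle on that route.
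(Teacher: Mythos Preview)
Your proposal is correct, and the route is genuinely different from the paper's, though the two share a common ancestry in Murota's book. The paper does \emph{not} quote the ``laminar convex $\Rightarrow$ M$^{\natural}$-convex'' fact as a black box; instead it essentially reproves it with the constraint $x(V)=r$ woven in. Concretely, it represents the laminar family $\mathcal{T}$ (enlarged to contain $\emptyset$, $V$, and all singletons) by a rooted directed tree $G=(U,A;S,T)$ with source vertex $u_{0}$ and sinks $\{u_{\{v\}}:v\in V\}$, places $f_{X}$ on the arc $a_{X}$, encodes the constraint by the trivially M-convex one-variable function $f'(y)=0$ if $y=r$ and $+\infty$ otherwise at the root, and then exhibits $f$ as the flow-type transformation of $f'$ through this tree; Theorem~9.27 of \cite{bib:murota2003b} (M-convexity is preserved under such transformations) finishes the argument. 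The paper notes that this is Note~9.31 of \cite{bib:murota2003b} with a slight modification --- the modification is precisely the insertion of $f'$ to impose the hyperplane constraint.

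Your approach instead takes the M$^{\natural}$-convexity of the unconstrained laminar sum $g$ as known, passes to the M-convex lift $\hat g$ on $\mathbb{Z}^{\{0\}\cup V}$, and recovers $f$ by freezing the auxiliary coordinate at $-r$; your verification that coordinate restriction preserves M-EXC (because $0$ cannot appear in either support of the difference) is correct and is the one nontrivial step. What your route buys is modularity and brevity: no tree bookkeeping, and the constraint enters through a one-line restriction argument. What the paper's route buys is that it is closer to self-contained --- it does not presuppose the M$^{\natural}$-convexity of laminar convex functions, and it handles the hyperplane constraint within the same flow construction rather than by an extra closure property. Your caveat about needing $\dom f\neq\emptyset$ is the right one and applies equally to the paper's version.
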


\begin{proof}[Proof of Corollary~\ref{note:M-covexity}]
Without loss of generality, we assume that $\emptyset \in \mathcal{T}$, $V\in \mathcal{T}$, and every singleton set belong to $\mathcal{T}$.
We represent $\mathcal{T}$ by a directed tree $G=(U, A; S, T)$ with root $u_0$,
where $U=\{u_{X} \mid X\in \mathcal{T}\}\cup\{u_0\}$, $A=\{a_{X}\mid X\in \mathcal{T}\}$, $S=\{u_0\}$, $T=\{u_{\{v\}}\mid v\in V\}$, 
and $\del^{-}a_{X} = u_{X}$ and $\del^{+}a_{X} = u_{\hat{X}}$ for $X\in \mathcal{T}$, where $\hat{X}$ denotes the smallest member of $\mathcal{T}$ that properly contains $X$ (and $\hat{V}=0$ by convention).
We associate the given function $f_{X}$ with arc $a_{X}$ for $X\in \mathcal{T}$.
We define an M-convex function $f':\mathbb{Z}^{S} \rightarrow \mathbb{R} \cup \{+\infty\}$ by
\begin{equation}
 f'(y)=\begin{cases}
	0 & (y=r), \\
	+\infty & (\mbox{otherwise}).
       \end{cases}
\end{equation} 
Then, $f$ is the result of flow type transformation:
\begin{eqnarray}
f(x)=\mathop{\rm inf}_{\xi, y}\!\!\!\!\!\!&&\!\!\!\!\Biggl\{ f'(y) + \sum_{X\in \mathcal{T}}f_{X}(\xi(a_{X})) \Biggm| \del \xi = (y, -x, \vct{0}), \Biggr.\nonumber\\
&&\Biggl.\xi\in\mathbb{Z}^{A}, (y, -x, \vct{0})\in \mathbb{Z}^{S}\!\times\!\mathbb{Z}^{T}\!\times\!\mathbb{Z}^{U\setminus(S\cup T)}\Biggr\}\nonumber \\
&& \qquad\qquad\qquad\qquad\qquad\qquad\quad\!\! (x\in \mathbb{Z}^{T}).
\end{eqnarray}
Therefore, from Theorem 9.27. in \cite{bib:murota2003b}, $f$ has M-convexity.  
\end{proof}

\section{Examples}\label{sect:example}

In this section, we verify that the formulation in Sect.~\secref{sect:msfp3} provides reasonable packetized power.
Here, the power packet transferability is discussed through
the following M-convex submodular flow problem:
\begin{eqnarray}
  \mbox{Minimize}\!\!\!\! &&  \Gamma(u)=
   \!\!\sum_{V' \in \mathcal{T}_{V_{\rm S} \cup V_{\rm D}}}\!\! f_{\T\times V'}(\del u(\T \times V'))\nonumber \\ 
 && \qquad\qquad + \sum_{(\ov{t}, a)\in \T\times A} \beta(a) \left| u(\ov{t}, a) \right|
   \label{eq:example_obj_func}\\
  \mbox{subject to}\!\!\!\! && \check{c}(a) \leq u(\ov{t}, a) \leq \hat{c}(a) \quad\!\! ((\ov{t}, a)\in \T\times A), \label{eq:example_dom_fta}\\
  && \del u (\ov{t}, v) = 0 \quad ((\ov{t}, v)\in \T \times V), \label{eq:example_dom_f} \\
  && u(\ov{t}, a) \in \mathbb{Z} \quad ((\ov{t}, a)\in \T\times A)\label{eq:example_integer},
\end{eqnarray}
where $\mathcal{T}_{V_{\rm S} \cup V_{\rm D}}$ is a laminar family of $V_{\rm S}\cup V_{\rm D}$. 
Note that $\{\T\times V' \mid V' \in \mathcal{T}_{V_{\rm S} \cup V_{\rm D}}\}$ is a laminar family of $\T \times (V_{\rm S} \cup V_{\rm D})$.
This problem is a special case in which we set the followings:
\begin{itemize}
 \item Total supplied energy is the only concern at sources and destinations. More precisely, $\mathcal{T}_{\rm S, D}$ is set to be $\{ \T \times V' \mid V' \in \mathcal{T}_{V_{\rm S} \cup V_{\rm D}}\}$.
 \item Each cost function $f_{(\ov{t}, a)}$ is defined as an absolute value function with effective domain  $[\check{c}(a), \hat{c}(a)]$ $((\ov{t}, a)\in \T\times A)$. The coefficients $\beta(a)$ $( a\in A)$ are non-negative real numbers.
       Here, $f_{(\ov{t}, a)}$ does not depend on unit times $\T$.
 \item Boundary of flow is set to be zero at routers $V$ in \eqref{eq:example_dom_f} in order to transfer power without strain.
       Note that, because of this constraint, we have $\del u(\{\ov{t}\}\times V_{\rm S}) = -\del u(\{\ov{t}\}\times V_{\rm D})$ $(\ov{t}\in \T)$ for a feasible flow $u$.

\end{itemize}  

\begin{figure}[tbp]
  \centering
   \includegraphics[width=50mm]{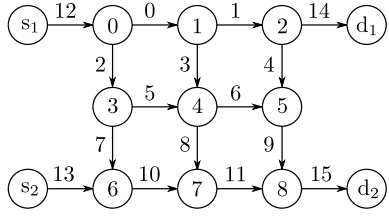}
 \caption{Network structure given as a mesh graph. For links between routers $a\in \{ 0, 1, \cdots, 11\}$, we set $\beta(a)=1$, $\check{c}(a) = -1$, and $\hat{c}(a)=1$, which imply $u(\ov{t}, a)\in \{-1, 0, 1\}$ $(\ov{t}\in \T)$.
For the other links $a\in \{ 12, 13, 14, 15\}$, we set $\beta(a)=0$, $\check{c}(a) = -\infty$ and $\hat{c}(a) = +\infty$, which imply $u(\ov{t}, a)\in \mathbb{Z}$ $(\ov{t}\in \T)$.}
 \label{fig:mesh_graph_3_3}
\end{figure}

In the following, we investigate the aforementioned problem with a mesh graph as an example.
The network structure is shown in Fig.~\figref{fig:mesh_graph_3_3}, where $V=\{0, \cdots, 8\}$, $V_{\rm S}=\{{\rm s}_1, {\rm s}_2\}$, $V_{\rm D}=\{{\rm d}_1, {\rm d}_2\}$, and $A=\{0, 1, \cdots, 15\}$.
For links between routers $a\in \{ 0, 1, \cdots, 11\}$, we set $\beta(a)=1$, $\check{c}(a) = -1$, and $\hat{c}(a)=1$.
This capacity constraint implies $u(\ov{t}, a)\in \{-1, 0, 1\}$ $(\ov{t}\in \T, a\in \{ 0, 1, \cdots, 11\})$.
For the other links $a\in \{ 12, 13, 14, 15\}$, we set $\beta(a)=0$, $\check{c}(a) = -\infty$, and $\hat{c}(a) = +\infty$, and hence we have $u(\ov{t}, a)\in \mathbb{Z}$ $(\ov{t}\in \T)$.

We consider three examples of the problem, in which energy is given at the sources and the destinations in different ways:
in the first example (E1), energy is given at each source and each destination by $U_{1}: V_{\rm S}\cup V_{\rm D} \rightarrow \mathbb{Z}$;
in the second example (E2), energy is given at each destination by $U_{2}: V_{\rm D}\rightarrow \mathbb{Z}_{\geq 0}$;
in the third example (E3), the total supplied energy is given by $U\in \mathbb{Z}_{\geq 0}$.
In E1, $U_{1}$ is set to satisfy $\sum_{s\in V_{\rm S}} U_{1}(s) = -\sum_{d\in V_{\rm D}} U_{1}(d)$ because power is transferred without strain in this example. 
In E1, E2, and E3, the objective functions are defined respectively as
\begin{eqnarray}
  \Gamma_{1}(u)&{}={}& \sum_{v\in V_{\rm S}\cup V_{\rm D}} 1000 \left| \del u(\T \times \{v\}) - U_1(v) \right|\nonumber\\ 
   && {}+{} \sum_{(\ov{t}, a)\in \T\times A}  \beta(a)\left| u(\ov{t}, a) \right|, \label{eq:case1_obj_func}
\end{eqnarray}
\begin{eqnarray}
  \Gamma_{2}(u)&{}={}&
  \sum_{v\in V_{\rm D}} 1000 \left| \del u(\T \times \{v\}) + U_2(v) \right| \nonumber\\
   && {}+{} \sum_{(\ov{t}, a)\in \T\times A}  \beta(a)\left| u(\ov{t}, a) \right|,\label{eq:case2_obj_func}
\end{eqnarray}
and
\begin{eqnarray}
  \Gamma_{3}(u)&{}={}&
  1000 \left| \del u(\T \times V_{\rm D}) + U \right| \nonumber\\ 
  &&{}+{} \sum_{(\ov{t}, a)\in \T\times A}  \beta(a)\left| u(\ov{t}, a) \right|.\label{eq:case3_obj_func}
\end{eqnarray}
In these functions, the coefficients of the first terms are set to be $1000$.
This value is large enough to give priority to the representation of the given energy over minimization of energy transferred through links.

\begin{figure*}[t]
 \centering
 \subfloat[An optimal flow $\alpha_0$]{\includegraphics[width=0.25\linewidth]{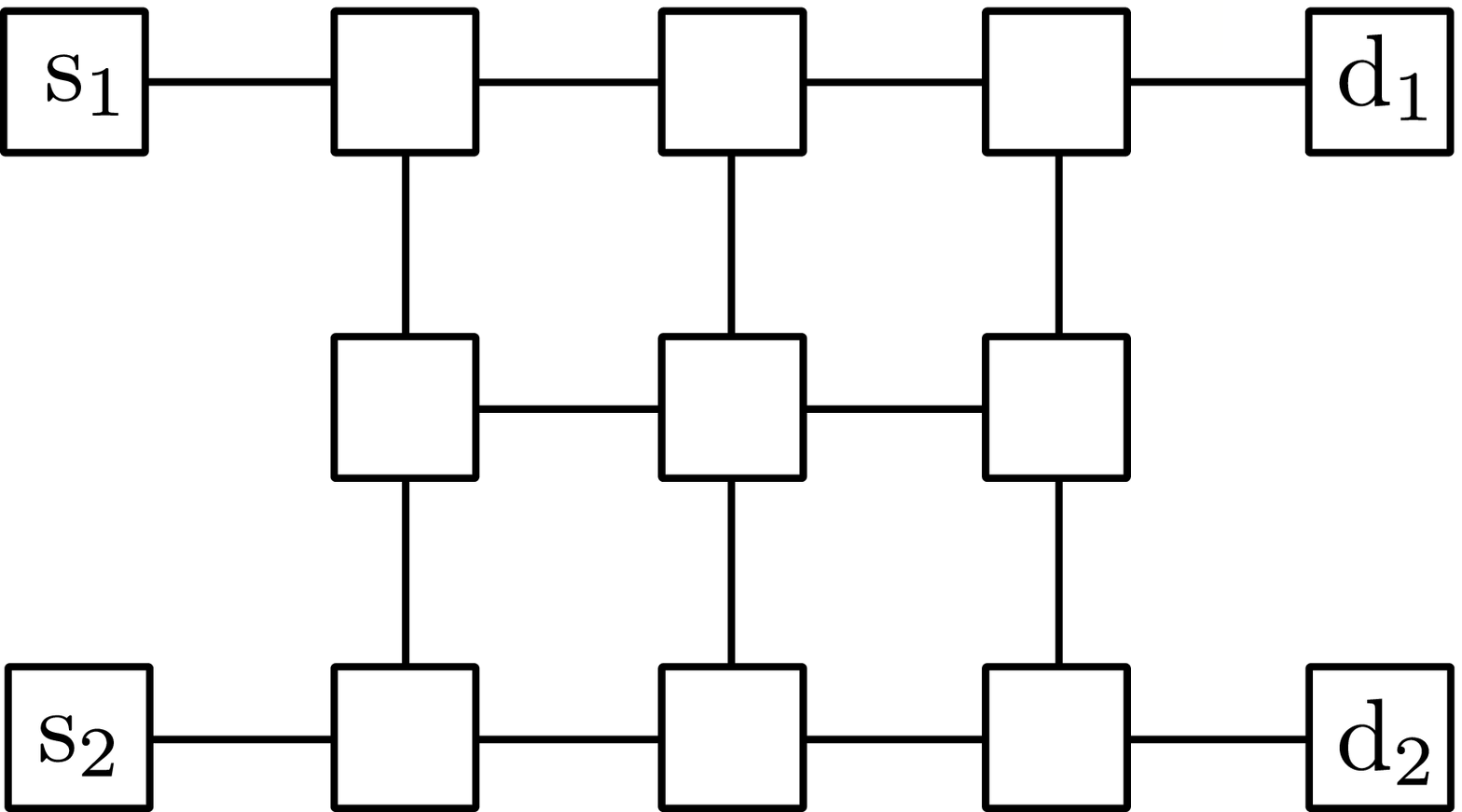}\label{fig:setting1-0}}
 \hfil
 \subfloat[An optimal flow $\alpha_1$]{\includegraphics[width=0.25\linewidth]{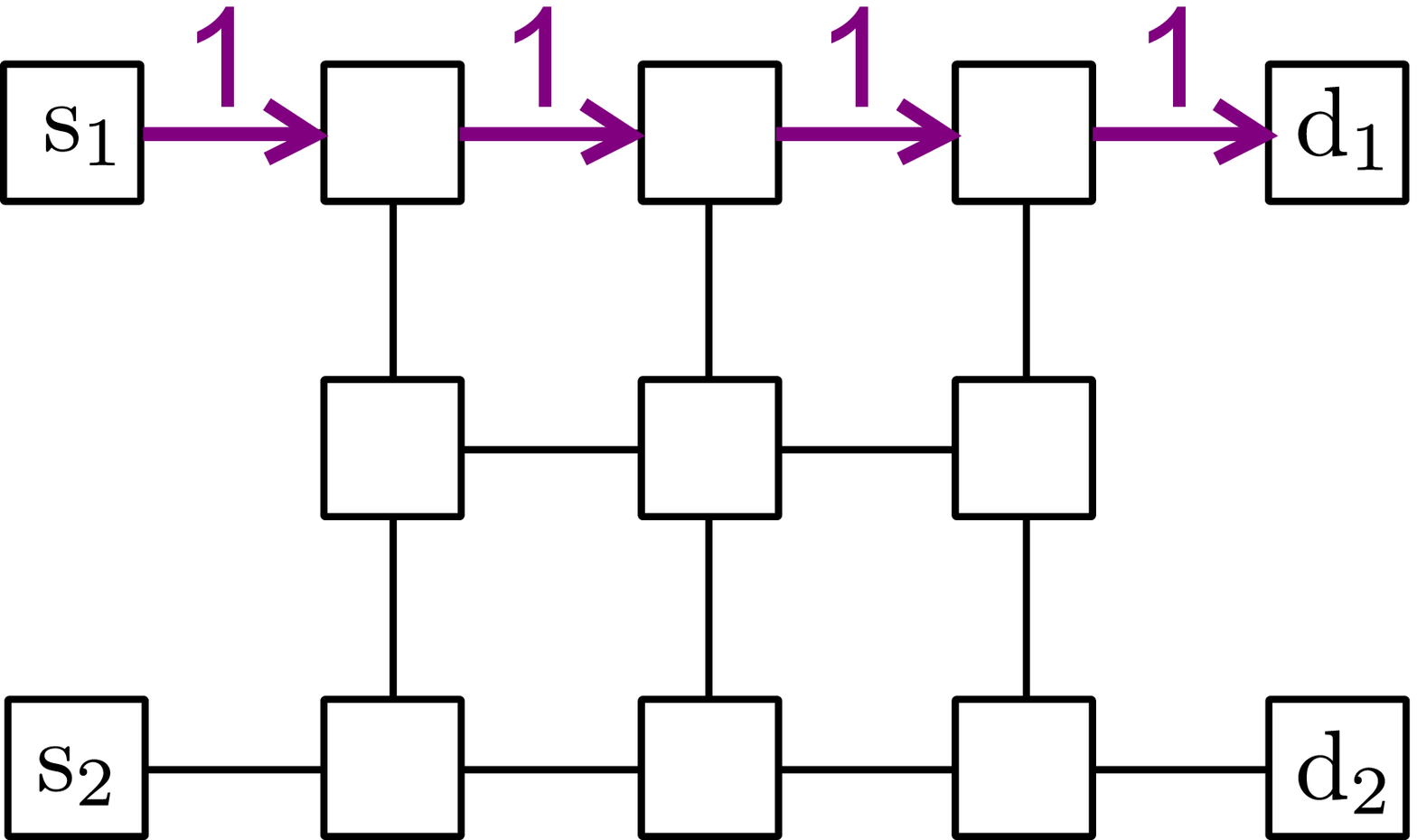}\label{fig:setting1-1}} 
 \hfil
  \subfloat[An optimal flow $\alpha_2$]{\includegraphics[width=0.25\linewidth]{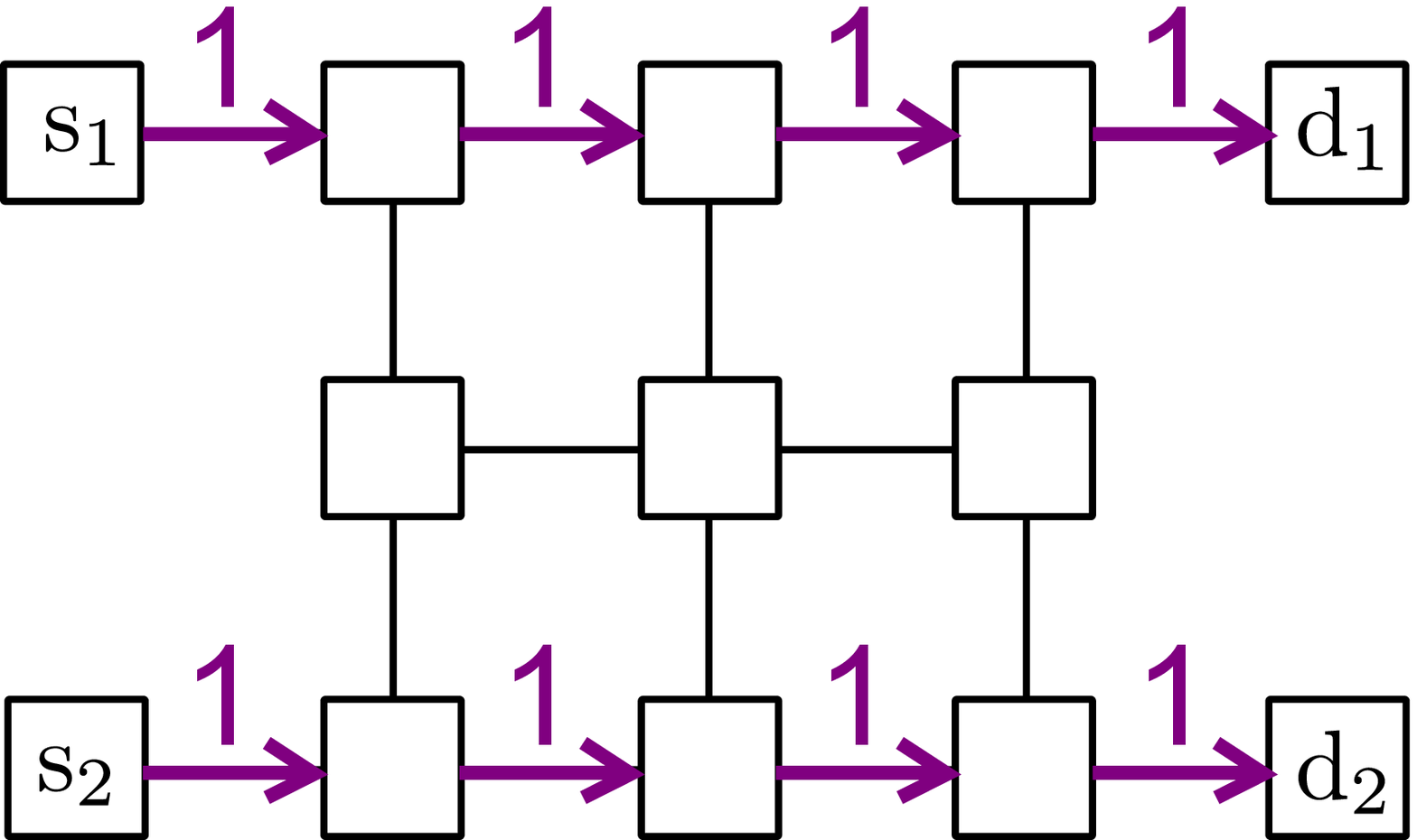}\label{fig:setting1-2}}
 
 \subfloat[An optimal flow $\alpha_3$]{\includegraphics[width=0.25\linewidth]{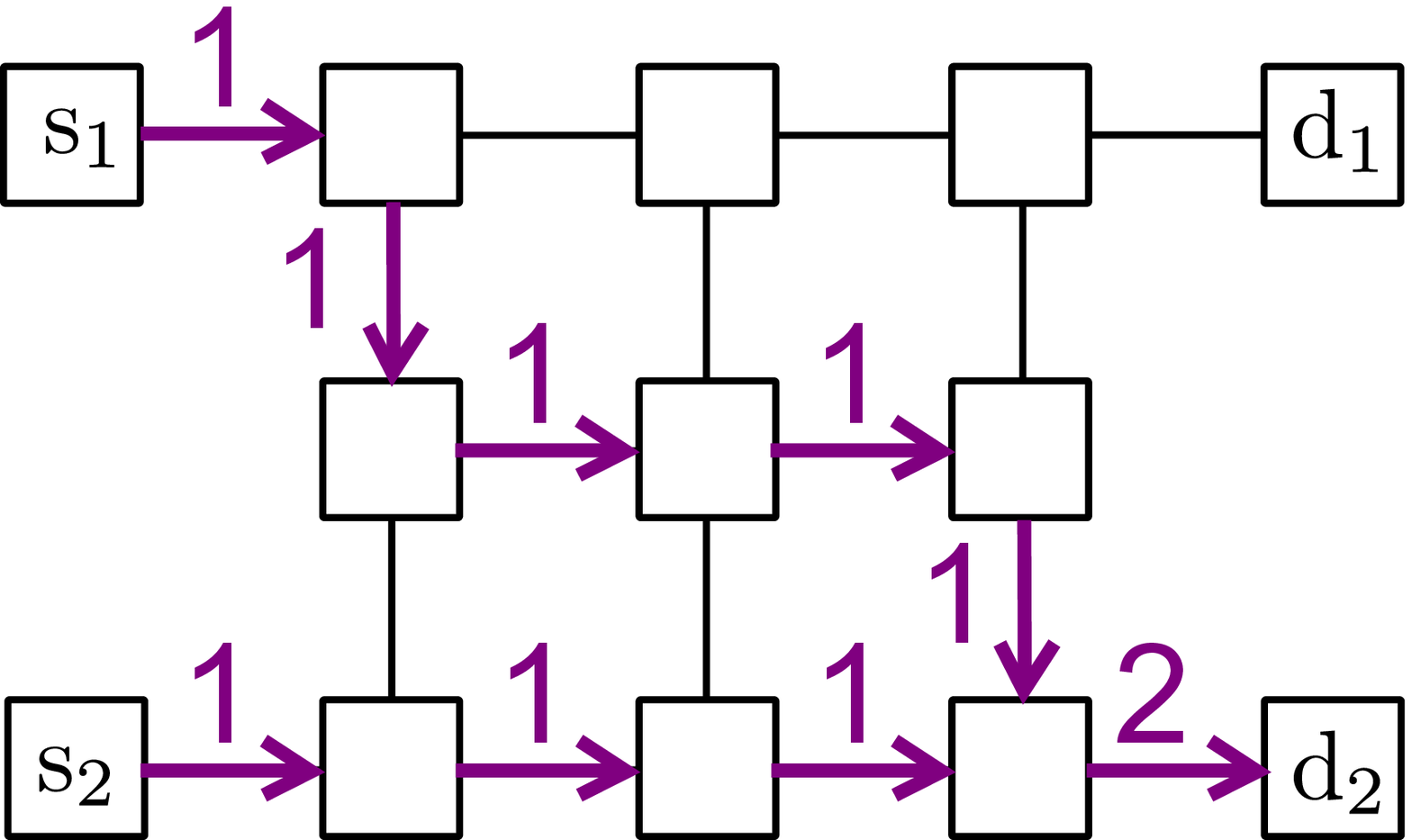}\label{fig:setting1-3}} 
 \hfil
 \subfloat[An optimal flow $\alpha_4$]{\includegraphics[width=0.25\linewidth]{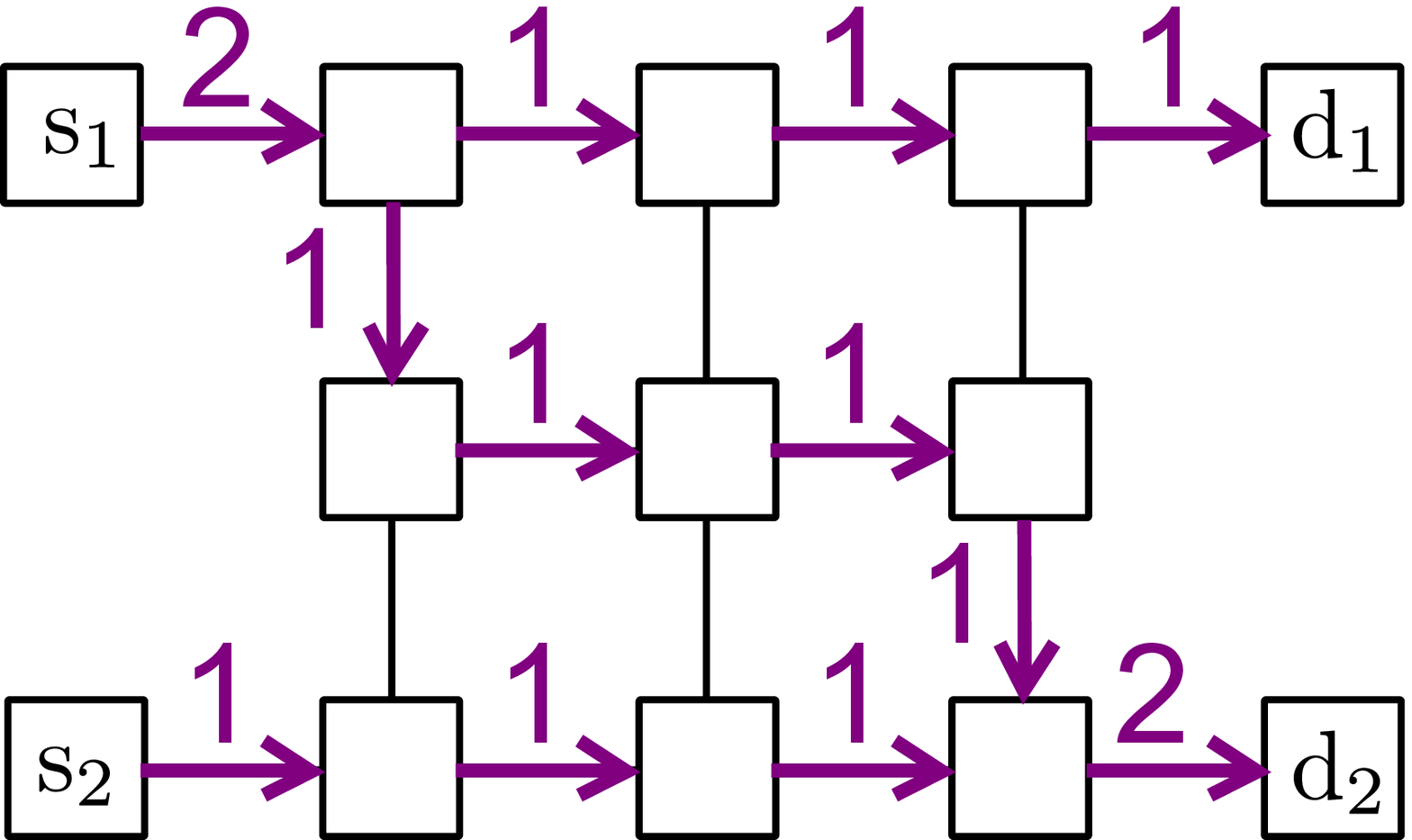}\label{fig:setting1-4}}
 \hfil
 \subfloat[An optimal flow $\alpha_5$]{\includegraphics[width=0.25\linewidth]{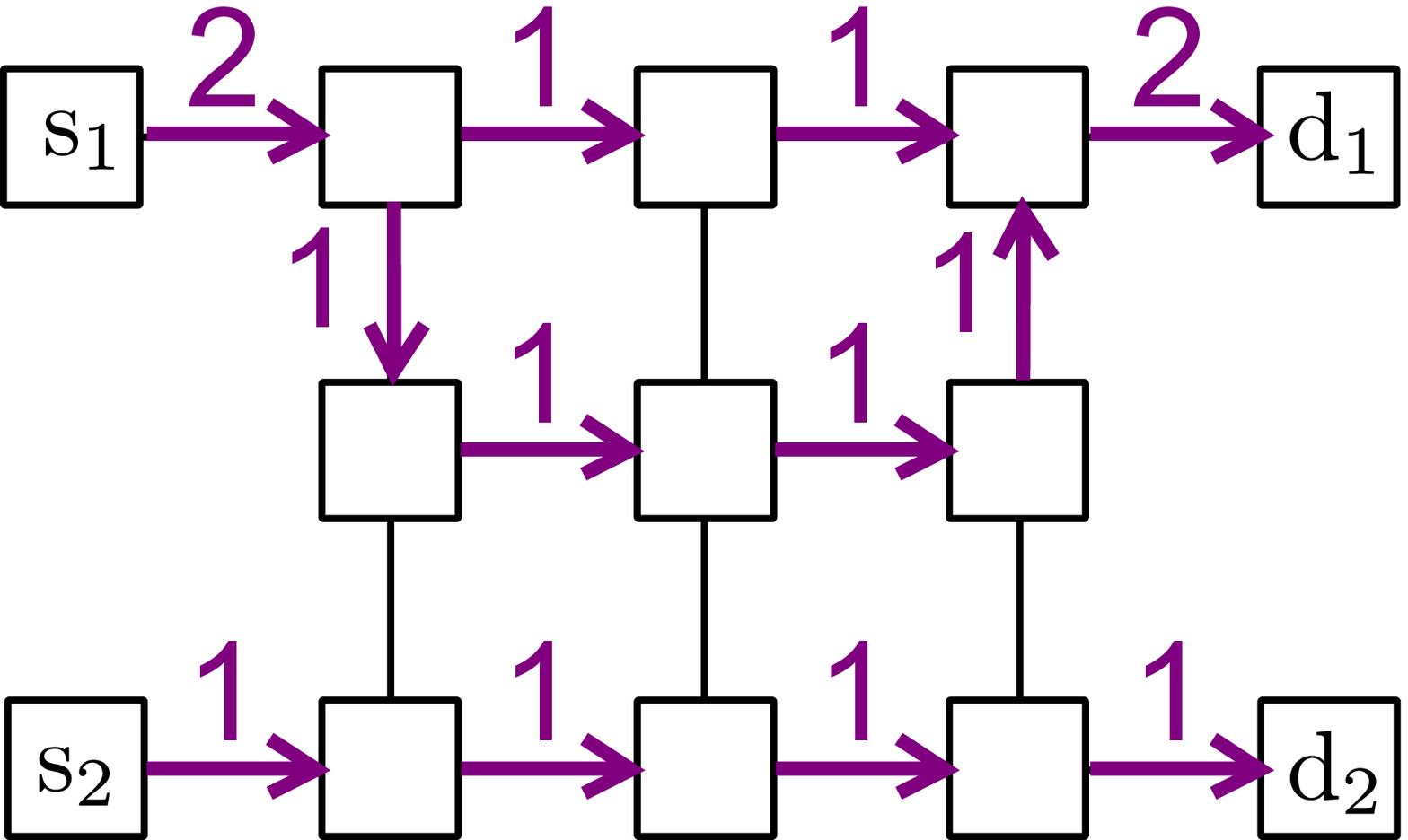}\label{fig:setting1-5}} 

 \caption{Optimal flows with $\Gamma_1$ and $N=1$. The energies are given as (a) $U_1({\rm s}_1)=0$, $U_1({\rm s}_2)=0$, $U_1({\rm d}_1)=0$, and $U_1({\rm d}_2)=0$, (b) $U_1({\rm s}_1)=1$, $U_1({\rm s}_2)=0$, $U_1({\rm d}_1)=-1$, and $U_1({\rm d}_2)=0$, (c) $U_1({\rm s}_1)=1$, $U_1({\rm s}_2)=1$, $U_1({\rm d}_1)=-1$, and $U_1({\rm d}_2)=-1$, (d) $U_1({\rm s}_1)=1$, $U_1({\rm s}_2)=1$, $U_1({\rm d}_1)=0$, and $U_1({\rm d}_2)=-2$, (e) $U_1({\rm s}_1)=2$, $U_1({\rm s}_2)=1$, $U_1({\rm d}_1)=-1$, and $U_1({\rm d}_2)=-2$, and (f) $U_1({\rm s}_1)=2$, $U_1({\rm s}_2)=1$, $U_1({\rm d}_1)=-2$, and $U_1({\rm d}_2)=-1$.} 
 \label{fig:setting1}
\end{figure*}
\begin{table*}[!ht]
 \centering
  \caption{Optimal flows $u$ in E1 and their costs with various settings of given energy $U_1$.}
  \label{tab:setting1}
  \vspace{1mm}
  \begin{tabular}[tb]{ccccc}
   \hline \hline
   $N$ & $U_1({\rm s}_1)$, $U_1({\rm s}_2)$ &  $U_1({\rm d}_1)$, $U_1({\rm d}_2)$ & Cost $\Gamma_1(u)$ & An Optimal Flow\\
   \hline
   $1$ &  $0$, $0$ & $0$, $0$ & $0$   & $\alpha_{0}$\\
   $1$ &  $1$, $0$ & $-1$, $0$ & $2$   & $\alpha_{1}$\\
   $1$ &  $1$, $1$ & $-1$, $-1$ & $4$   & $\alpha_{2}$\\
   $1$ &  $1$, $1$ &  $0$, $-2$  & $6$   & $\alpha_{3}$\\
   $1$ &  $2$, $1$ &  $-1$, $-2$  & $8$   & $\alpha_{4}$\\
   $1$ &  $2$, $1$ &  $-2$, $-1$  & $8$   & $\alpha_{5}$\\
   \hline \hline
  \end{tabular}
\end{table*}

 \begin{table}[t]
 \centering
  \caption{Optimal flows $u$ in E2 and their costs with various settings of $N$ and given energy $U_2$. Here, optimal flows are denoted by the sequence of  $\{\alpha_{i}\}_{i=0}^{5}$ in time order.}
  \label{tab:setting2}
  \begin{tabular}[tb]{cccc}
   \hline \hline
   $N$ &  $U_2({\rm d}_1)$, $U_2({\rm d}_2)$ & Cost $\Gamma_2(u)$ & An Optimal Flow\\
   \hline
   $2$ &  $3$, $3$ & $16$ & $\alpha_{4}\alpha_{5}$\\
   $3$ &  $3$, $3$ & $12$ & $\alpha_{2}\alpha_{2}\alpha_{2}$\\
   $5$ &  $3$, $3$ & $12$ & $\alpha_{2}\alpha_{2}\alpha_{2}\alpha_{0}\alpha_{0}$\\
   $2$ &  $5$, $2$ & $1016$ & $\alpha_{5}\alpha_{5}$\\
   $3$ &  $5$, $2$ & $18$ & $\alpha_{5}\alpha_{5}\alpha_{1}$\\
   $5$ &  $5$, $2$ & $14$ & $\alpha_{2}\alpha_{2}\alpha_{1}\alpha_{1}\alpha_{1}$\\
   \hline \hline
  \end{tabular}
\end{table}

 \begin{table}[t]
 \centering
  \caption{Optimal flows $u$ in E3 and their costs with various settings of $N$ and given energy $U$. Here, optimal flows are denoted by the sequence of  $\{\alpha_{i}\}_{i=0}^{5}$ in time order. The settings of $N$ and $U$ correspond to the settings in Tab.~\tabref{tab:setting2} through the relationship $\sum_{d\in V_{\rm D}} U_2(d) = U$.}
  \label{tab:setting3}
  \begin{tabular}[tb]{cccc}
   \hline \hline
   $N$ &  $U$ & Cost $\Gamma_3(u)$ & An Optimal Flow\\
   \hline
   $2$ &  $6$ & $16$ & $\alpha_{4}\alpha_{4}$\\
   $3$ &  $6$ & $12$ & $\alpha_{2}\alpha_{2}\alpha_{2}$\\
   $5$ &  $6$ & $12$ & $\alpha_{2}\alpha_{2}\alpha_{2}\alpha_{0}\alpha_{0}$\\
   $2$ &  $7$ & $1016$ & $\alpha_{4}\alpha_{4}$\\
   $3$ &  $7$ & $16$ & $\alpha_{4}\alpha_{2}\alpha_{2}$\\
   $5$ &  $7$ & $14$ & $\alpha_{2}\alpha_{2}\alpha_{2}\alpha_{1}\alpha_{0}$\\
   \hline \hline
  \end{tabular}
\end{table}

Now, we discuss these examples, solving them using cycle-canceling algorithm \cite{bib:murota2003b}. 
First, with $\Gamma_1$ in E1, we set $N = 1$ and show an optimal flow for each setting of $U_1$ in Fig.~\ref{fig:setting1}.
The six optimal flows are named as $\{\alpha_{i}\}_{i=0}^{5}$.
The optimal flows $u$ are listed in Tab.~\tabref{tab:setting1} with their settings and costs $\Gamma_1(u)$. 
Next, with $\Gamma_2$ in E2 and with $\Gamma_3$ in E3,
we list optimal flows as shown in Tabs.~\tabref{tab:setting2} and \tabref{tab:setting3}, respectively.
Here, each optimal flow is denoted by a sequence of $\{\alpha_{i}\}_{i=0}^{5}$ in time order.
Note that the settings of $N$ and $U$ in Tab.~\tabref{tab:setting3} correspond to the settings in Tab.~\tabref{tab:setting2} through the relationship $\sum_{d\in V_{\rm D}} U_2(d) = U$.
In the optimal flows in which costs exceed $1000$, the given energy is not represented.

From Fig.~\figref{fig:setting1} and Tab.~\tabref{tab:setting1}, the following properties are confirmed at $N=1$:
\begin{itemize}
 \item  By minimizing the cost of V1, given energy $U_1$ is represented at each source and each destination.
 \item By minimizing the cost of V2, energy is transferred with the smallest number of links. Note that, in flow $\alpha_2$ in Fig.~\figref{fig:setting1-2}, the destinations ${\rm d}_1$ and ${\rm d}_2$ are supplied from the nearest sources, i.e. ${\rm s}_1$ for ${\rm d}_1$ and ${\rm s}_2$ for ${\rm d}_2$.
 \item By imposing the constrain on V3 as shown in \eqref{eq:example_dom_f}, energy is transferred without strain, i.e. without change of stored energy.
\end{itemize}
Tables~\tabref{tab:setting2} and \tabref{tab:setting3} imply the following properties:

\begin{itemize}
 \item The cost of the optimal flow decreases as the number of time step, $N$, increases up to a certain point. 
       From $N$ exceeding the point, the cost takes a constant value and $\alpha_0$, in which symbols are not transferred, is added to the optimal flow.  

 \item The cost of the optimal flow decreases when the distribution of supplied energy is not specified and the total supplied energy is given. 
       For example,  the cost becomes $\Gamma_2(u) = 18$ when we set $N=3$, $U_2({\rm d_1})=5$, and $U_2({\rm d_2})=2$ in E2, while the cost becomes $\Gamma_3(u)=16$ when we set $N=3$ and $U=7$, which is equal to $\sum_{d\in V_{\rm D}} U_2(d)=5+2$, in E3.
      
\end{itemize}
In other words, a more transferable SPM can be selected if symbols are transferred with less temporal and spatial restriction.
These properties show that power can be packetized and be controllable while preserving reasonable properties of power.

\section{Conclusion}

In this paper, we have established the packet-centric framework of electrical energy networks, defining symbols in power packetization as a minimum unit of electric power transferred by a power pulse with an information tag.
Here, packetized power is spatially and temporally transferred as symbols in a digitized and quantized manner.
At each source and each destination, given energy is represented during a finite duration with symbols as the total amount of energy of symbols.
Then, the supply and the demand are met in the network.

To mathematically represent such transmission of packetized power, 
we introduced the symbol propagation matrix, in which a symbol is transferred at a link during a unit time.
Via SPM, packetized power is described as a network flow in a spatio-temporal structure.
Then, we considered a network flow problem for selecting an SPM in terms of transferability, that is, the possibility to represent given energies at sources and destinations during the finite duration.
In networks, packetized power appears as supplied energy from sources and supplied energy to destinations (V1), transferred energy at each link during each unit time (V2), and change of stored energy in each router (V3).
Setting a laminar family of subsets of nodes in spatio-temporal structure for the costs of V1 and V3, we can formulate this problem as an M-convex submodular flow problem which is known as generalization of the minimum cost flow problem and solvable.
Unlike conventional minimum cost flow problems, here, we weighted not only values of network flow (V2) but also values of boundary of network flow and their time integrals (V1 and V3). 
Finally, the formulation was discussed through examples and it is shown that power can be packetized and be controllable while preserving reasonable properties of power. 

The established packet-centric framework is completely different from the circuit theory, in which power is handled in a continuous manner and is governed by Kirchhoff laws and Tellegen's theorem \cite{bib:desoer1969b}. 
Here, power packet is introduced as a unit of electric power, so that power is digitized and quantized. 
The results of this paper suggest a mathematical framework which integrates energy and information in electrical energy networks.

\section*{Acknowledgments}

The author (S.N.) would like to thank the current and former members of the Robotics, Perception and Learning Laboratory of the Royal Institute of Technology (KTH) for fruitful discussions.

Parts of this work were financially supported by the Cross-Ministerial Strategic Innovation Program from the New Energy and Industrial Technology Development Organization, Japan, and by the Super Cluster Program (Kyoto) from the Japan Science and Technology Agency. The work of the author (S.N.) was financially supported in part by Kyoto University.












\begin{thebibliography}{99}
\providecommand{\doi}[1]{\href{https://doi.org/#1}{doi:#1}}

\bibitem{bib:desoer1969b}
Desoer CA, Kuh ES. 1969 \emph{Basic circuit theory}, chap.~9. New York, NY: McGraw-Hill.

\bibitem{bib:oster1971a}
Oster G, Perelson A, Katchalsky A. 1971 Network thermodynamics. \emph{Nature} \textbf{234}, 393--399. (\doi{10.1038/234393a0})

\bibitem{bib:shannon1948a}
Shannon CE. 1948 A mathematical theory of communication. \emph{Bell Syst. Tech. J.} \textbf{27}, 379--423 and 623--656. (\doi{10.1002/j.1538-7305.1948.tb01338.x})

\bibitem{bib:massey1984a}
Massey JL. 1984 Information theory: The copernican system of communications. \emph{IEEE Commun. Mag.} \textbf{22}, 26--28. (\doi{10.1109/MCOM.1984.1091871})

\bibitem{bib:kleinrock2010a} Kleinrock L. 2010 An early history of the {I}nternet [history of communications]. \emph{IEEE Commun. Mag.} \textbf{48}, 26--36.  (\doi{10.1109/MCOM.2010.5534584})

\bibitem{bib:takuno2010i}
Takuno T, Koyama M, Hikihara T. 2010 In-home power distribution systems
  by circuit switching and power packet dispatching. In \emph{Proc. 1st IEEE Int. Conf. Smart Grid Commun., Gaithersburg, MD, 4--6 October}, pp. 427--430. Piscataway, NJ: IEEE.

\bibitem{bib:takahashi2012a}
Takahashi R, Takuno T, Hikihara T. 2012 Estimation of power packet transfer properties on indoor power line channel. \emph{Energies} \textbf{5}, 2141--2149.  (\doi{10.3390/en5072141})

\bibitem{bib:takahashi2015a}
Takahashi R, Tashiro K, Hikihara T. 2015 Router for power packet
  distribution network: Design and experimental verification. \emph{IEEE Trans. Smart Grid} \textbf{6}, 618--626. (\doi{10.1109/TSG.2014.2384491})

\bibitem{bib:zhou2015b}
Zhou Y, Takahashi R, Fujii N, Hikihara T. 2016 Power packet
  dispatching with second-order clock synchronization. \emph{Int. J. Circ. Theor. Appl.} \textbf{44}, 729--743. (\doi{10.1002/cta.2103})

\bibitem{bib:takahashi2016a}
Takahashi R, Azuma S, Hikihara T. 2016 Power regulation with predictive
  dynamic quantizer in power packet dispatching system. \emph{IEEE Trans. Ind. Electron.} \textbf{63}, 7653--7661. (\doi{10.1109/TIE.2016.2591898})

\bibitem{bib:mochiyama2016a}
Mochiyama S, Takahashi R, Hikihara T. 2016 Trajectory control of
  manipulator fed by power packets. \emph{Int. J. Circ. Theor. Appl.}  (\doi{10.1002/cta.2275})

\bibitem{bib:nawata2014a_eng}
Nawata S, Takahashi R, Hikihara T. 2014 An asymptotic property of energy
  representation with power packet [in Japanese]. \emph{IEICE Trans. Fundamentals} \textbf{J97-A}, 584--592.

\bibitem{bib:nawata2016a_upstream}
Nawata S, Takahashi R, Hikihara T. 2016 Up-stream dispatching of power
  by density of power packet. \emph{IEICE Trans. Fundamentals} \textbf{E99-A}, 2581--2584. (\doi{10.1587/transfun.E99.A.2581})

\bibitem{bib:hikihara2016i}
Hikihara T, Nawata S, Takahashi R. 2016 Power packet dispatching and dynamics in network. In \emph{Proc. 24th Int. Congr. Theor. Appl. Mech. (ICTAM 2016), Montreal Canada, 21--26 August}.

\bibitem{bib:toyoda_saitoh1998i}
Toyoda J, Saitoh H. 1998 Proposal of an open-electric-energy-network ({OEEN}) to realize cooperative operations of {IOU} and {IPP}. In \emph{Proc. Int. Conf. Energy Manage. Power Del., Singapore, Singapore, 3--5 March}, pp. 218--222. Piscataway, NJ: IEEE.

\bibitem{bib:inoue2011a}
Inoue J, Fujii Y. 2011 Proposal of an innovative electric power
  distribution system based on packet power transactions [in Japanese]. \emph{IEEJ Trans. Power and Energy} \textbf{131}, 143--150. (\doi{10.1541/ieejpes.131.143})

\bibitem{bib:he_katz2008i}
He MM, Reutzel EM, Jiang X, Katz RH, Sanders SR, Culler DE, Lutz K. 2008 An architecture for local energy generation, distribution, and sharing. In \emph{Proc. IEEE Energy 2030 Conf., 17--18 November}, pp. 1--6. Piscataway, NJ: IEEE.

\bibitem{bib:gelenbe2012i}
Gelenbe E. 2012 Energy packet networks: Adaptive energy management for the
  cloud. In \emph{Proc. 2nd Int. Workshop Cloud Comput. Platforms, Bern,
  Switzerland, 10 April}. New York, NY: ACM.

\bibitem{bib:Rojas2013a2}
Rojas-Cessa R, Xu Y, Grebel H. 2013 Management of a smart grid with
  controlled-delivery of discrete power levels. In \emph{Proc. IEEE Int. Conf. Smart Grid Commun., Vancouver, Canada, 21--24 October}, pp. 1--6. Piscataway, NJ: IEEE.

\bibitem{bib:stalling2012i}
Stalling BP, Clemmer T, Mantooth HA, Motte R, Xu H, Price T, Dougal R. 2012 Design and evaluation of a universal power router for residential applications. In \emph{Proc. IEEE Energy Convers. Congr. Expo., Raleigh, NC, 15--20 September}, pp. 587--594. Piscataway, NJ: IEEE.

\bibitem{bib:funaki2007a}
Funaki T, Balda JC, Junghans J, Kashyap AS, Mantooth HA, Barlow F, Kimoto T, Hikihara T. 2007 Power conversion with {S}i{C} devices at extremely high ambient temperatures.  \emph{IEEE Trans. Power Electron.} \textbf{22}, 1321--1329.  (\doi{10.1109/TPEL.2007.900561})

\bibitem{bib:millian2014a}
Mill{\'a}n J, Godignon P, Perpi{\~n}{\`a} X, P{\'e}rez-Tom{\'a}s A, Rebollo J. 2014 A survey of wide bandgap power semiconductor devices. \emph{IEEE Trans. Power Electron.} \textbf{29}, 2155--2163. (\doi{10.1109/TPEL.2013.2268900})

\bibitem{bib:takuno2009i}
Takuno T, Hikihara T, Tsuno T, Hatsukawa S. 2009 {HF} gate drive circuit for a normally-on {S}i{C} {JFET} with inherent safety. In \emph{Proc. 13th Eur. Conf. Power Electron. Appl., Barcelona, Spain, 8--10 September}, pp. 1--4.


\bibitem{bib:nagaoka2015a}
Nagaoka K, Chikamatsu K, Yamaguchi A, Nakahara K, Hikihara T. 2015
  High-speed gate drive circuit for {S}i{C} {M}{O}{S}{F}{E}{T} by {G}a{N}
  {H}{E}{M}{T}. \emph{IEICE Electronics Express} \textbf{12}, 1--8.  (\doi{10.1587/elex.12.20150285})

\bibitem{bib:maki2009i}
Perbet F, Maki A, Stenger B. 2009 Correlated probabilistic trajectories
  for pedestrian motion detection. In \emph{Proc. IEEE 12th Int. Conf. Computer Vision, Kyoto, Japan, 29 September--2 October}, pp. 1647--1654. Piscataway, NJ: IEEE.

\bibitem{bib:maki2013a}
Maki A, Perbet F, Stenger B, Cipolla R. 2013 Detecting bipedal motion
  from correlated probabilistic trajectories. \emph{Pattern Recognition Letters} \textbf{34}, 1808--1818. (\doi{10.1016/j.patrec.2012.12.019})

\bibitem{bib:murota2003b}
Murota K. 2003 \emph{Discrete convex analysis}. Philadelphia, PA: SIAM.

\bibitem{bib:iri1969b}
Iri M. 1969 \emph{Network flow, transportation and scheduling: Theory and
  algorithms}, pp. 28--37. New York, NY: Academic Press.

\bibitem{bib:iri_fujishige_oyama1986}
Iri M, Fujishige S, Oyama T. 1986 \emph{Graphs, networks, and matroids [in Japanese]}.  Tokyo, Japan: Sangyo-Tosho.

\end{thebibliography}
\end{document}